\theoremstyle{plain}
\newtheorem{theorem}{Theorem}[section]
\newtheorem{proposition}[theorem]{Proposition}
\theoremstyle{definition}
\newtheorem{definition}[theorem]{Definition}
\theoremstyle{remark}
\def\BibTeX{{\rm B\kern-.05em{\sc i\kern-.025em b}\kern-.08em
    T\kern-.1667em\lower.7ex\hbox{E}\kern-.125emX}}
\begin{document}

\title{Reducing QAOA Circuit Depth by Factoring out Semi-Symmetries}

\author{
\IEEEauthorblockN{Jonas Nüßlein}
\IEEEauthorblockA{\textit{Institute of Computer Science} \\
\textit{LMU Munich}\\
Germany \\
jonas.nuesslein@ifi.lmu.de}
\and
\IEEEauthorblockN{Leo Sünkel}
\IEEEauthorblockA{\textit{Institute of Computer Science} \\
\textit{LMU Munich}\\
Germany}
\and
\IEEEauthorblockN{Jonas Stein}
\IEEEauthorblockA{\textit{Institute of Computer Science} \\
\textit{LMU Munich}\\
Germany}
\and
\IEEEauthorblockN{Tobias Rohe}
\IEEEauthorblockA{\textit{Institute of Computer Science} \\
\textit{LMU Munich}\\
Germany}
\and
\IEEEauthorblockN{Daniëlle Schuman}
\IEEEauthorblockA{\textit{Institute of Computer Science} \\
\textit{LMU Munich}\\
Germany}
\and
\IEEEauthorblockN{Claudia Linnhoff-Popien}
\IEEEauthorblockA{\textit{Institute of Computer Science} \\
\textit{LMU Munich}\\
Germany}
\and
\IEEEauthorblockN{Sebastian Feld}
\IEEEauthorblockA{\textit{Quantum \& Computer Engineering Department} \\
\textit{Delft University of Technology}\\
The Netherlands}
}

\maketitle

\begin{abstract}

QAOA is a quantum algorithm for solving combinatorial optimization problems. It is capable of searching for the minimizing solution vector $x$ of a QUBO problem $x^TQx$. The number of two-qubit CNOT gates in the QAOA circuit scales linearly in the number of non-zero couplings of $Q$ and the depth of the circuit scales accordingly. Since CNOT operations have high error rates it is crucial to develop algorithms for reducing their number. We, therefore, present the concept of \textit{semi-symmetries} in QUBO matrices and an algorithm for identifying and factoring them out into ancilla qubits. \textit{Semi-symmetries} are prevalent in QUBO matrices of many well-known optimization problems like \textit{Maximum Clique}, \textit{Hamilton Cycles}, \textit{Graph Coloring}, \textit{Vertex Cover} and \textit{Graph Isomorphism}, among others. We theoretically show that our modified QUBO matrix $Q_{mod}$ describes the same energy spectrum as the original $Q$. Experiments conducted on the five optimization problems mentioned above demonstrate that our algorithm achieved reductions in the number of couplings by up to $49\%$ and in circuit depth by up to $41\%$.

\end{abstract}

\begin{IEEEkeywords}
QAOA, QUBO, Symmetry, Ising, Circuit Depth, Pareto Front
\end{IEEEkeywords}

\section{Introduction}

The Quantum Approximate Optimization Algorithm (QAOA) \cite{farhi2014quantum} is designed to tackle combinatorial optimization problems using quantum computers by preparing a quantum state that maximizes the expectation value of the cost-hamiltonian. QAOA is widely recognized as a prime contender for showcasing quantum advantage on Noisy Intermediate-Scale Quantum (NISQ) devices \cite{zou2023multiscale}. It aims to approximate the ground state of a given physical system, often referred to as the Hamiltonian. However, its successful implementation faces challenges due to the high error rates inherent in current near-term quantum devices, which lack full error correction capabilities.

Utilizing QAOA to solve a problem entails a two-step process. Initially, the problem is translated into a parametric quantum circuit consisting of $p$ layers each consisting of $2$ adjustable parameters, where $p$ is a hyperparameter that needs to be set manually. This circuit is then run for thousands of trials. Subsequently, a classical optimizer utilizes the expectation value of the output distribution to refine the parameters. This iterative process continues until the optimal parameters for the circuit are determined. The cost function, which QAOA tries to minimize is usually represented as a Quadratic Unconstrained Binary Optimization (QUBO) problem.

The quantity of two-qubit CNOT operations within a QAOA circuit is equal to $2C \cdot p$ where $C$ is the number of non-zero couplings in the QUBO (number of edges in the problem graph). However, CNOT operations are susceptible to errors and often lead to prolonged runtimes. For instance, on the Google Sycamore quantum processor \cite{ayanzadeh2023frozenqubits}, CNOTs exhibit an average error-rate of $1\%$. Furthermore, CNOT gates might require additional SWAP gates since control- and target-qubit might not be connected on the hardware chip. Therefore, minimizing the number of these operations becomes crucial to improve the efficiency and accuracy of quantum optimization algorithms like QAOA.

In this paper, we therefore propose a method for using ancilla qubits to reduce the number of non-zero couplings and therefore also the number of CNOT operations and the depth of the QAOA circuit. We do this by identifying \textit{semi-symmetries} (defined in \textbf{III.4}) which we factor out into ancilla qubits. Our algorithm can therefore create different QAOA circuits for the same QUBO by trading-off the number of qubits and the number of non-zero couplings (which translates linearly to the number of CNOT gates). To demonstrate the effectiveness of our approach, we tested it on five well-known optimization problems: Maximum Clique, Hamilton Cycles, Graph Coloring, Vertex Cover, and Graph Isomorphism. Our results show that our method can reduce the number of couplings by up to $49\%$ and the circuit depth by up to $41\%$, thus significantly improving the efficiency and scalability of QAOA for solving a diverse range of NP-hard optimization problems. Through our approach, we aim to contribute to the advancement of quantum optimization algorithms and facilitate their practical deployment in solving real-world optimization challenges.

\section{Background}

\subsection{Quadratic Unconstrained Binary Optimization}
Let $Q$ be a symmetric, real-valued $(n \times n)$-matrix and $x \in \mathbb{B}^n$ be a binary vector. Quadratic Unconstrained Binary Optimization (QUBO) ~\cite{10} is an optimization problem of the form:
\begin{equation}
x^* = \underset{x}{argmin} \: H(x) = \underset{x}{argmin} \: \sum_{i=1}^{n}\sum_{j=i}^{n}{x_i\ x_j\ Q_{ij}}
\end{equation}

The function $H(x)$ is usually called \textit{Hamiltonian}.
We will refer to the matrix $Q$ as the ``QUBO matrix'' in this paper.

The optimization task is to find a binary vector $x$ that is as close as possible to the optimum. Finding the optimum $x^*$ is known to be \textit{NP}-hard \cite{glover2018tutorial}. QUBOs attracted special attention recently since they can be solved using Quantum Optimization approaches like Quantum Annealing (QA) \cite{morita2008mathematical} or QAOA \cite{farhi2014quantum} which promises speed-ups compared to classical algorithms \cite{farhi2016quantum}. To solve other optimization problems using these technologies they need to be transformed into a QUBO representation. Numerous well-known optimization problems such as boolean formula satisfiability (SAT), knapsack, graph coloring, the traveling salesman problem (TSP), or max clique have already been translated to QUBO form~\cite{8,12,13,15,lucas2014ising}. Note that Ising problems \cite{cipra1987introduction} are isomorphic to QUBO problems \cite{zick2015experimental}.

\subsection{QAOA}
The Quantum Approximate Optimization Algorithm (QAOA) is a hybrid quantum-classical algorithm proposed by Farhi et al. in 2014\cite{farhi2014quantum} for solving combinatorial optimization problems, which involves finding the best solution from a finite set of possible solutions. Let $C(x)$ be a cost function, where $x$ represents a binary string encoding a possible solution. The goal is to find the $x$ that minimizes $C(x)$. QAOA encodes this optimization problem into a quantum circuit, which can be parameterized by angles $\gamma$ and $\beta$. The quantum circuit prepares a quantum state $|\psi(\gamma, \beta)\rangle$ that represents a superposition of all possible solutions. The quantum circuit consists of alternating layers of two types of operators: the cost operator $U_C$ and the mixer operator $U_B$. The cost operator is responsible for encoding the cost function into the quantum state, while the mixer operator is responsible for exploring different solutions efficiently. The quantum state $|\psi(\gamma, \beta)\rangle$ prepared by the circuit is given by:
\[
|\psi(\gamma, \beta)\rangle = e^{-i\gamma_p U_B}e^{-i\beta_p U_C} \cdots e^{-i\gamma_1 U_B}e^{-i\beta_1 U_C}|+\rangle^{\otimes n}
\]
where $|+\rangle^{\otimes n}$ represents the initial state of $n$ qubits initialized to the superposition state, and $U_C$ and $U_B$ are the cost and mixer operators, respectively which are applied $p$ times. $p$ is a hyperparameter that needs to be manually specified. The parameters $\gamma$ and $\beta$ control the evolution of the quantum state.

The next step involves optimizing the parameters  $\gamma$ and $\beta$ to minimize the expectation value of the cost function. This optimization process is typically performed using classical optimization algorithms such as gradient descent or genetic algorithms. Given the quantum state $|\psi(\gamma, \beta)\rangle$, the expectation value of the cost function can be calculated as $E(\gamma, \beta) = \langle \psi(\gamma, \beta) | C | \psi(\gamma, \beta) \rangle
$. The goal is to find the optimal parameters $\gamma^*$ and $\beta^*$ that minimize $E(\gamma, \beta)$. This optimization process involves iteratively updating the parameters.

\subsection{Maximum Clique}
In graph theory, the Maximum Clique problem involves finding the largest subset of vertices $V' \subseteq V$ in a graph $G(V,E)$ such that every pair of vertices is connected by an edge. This problem has extensive applications across various domains, including social network analysis and bioinformatics~\cite{eblen2011maximum, rossi2015parallel}. To formulate the Maximum Clique problem as a QUBO problem, binary variables $x_i$ are used for each vertex $i$, where $x_i = 1$ indicates that vertex $i$ is included in the clique, and $x_i = 0$ otherwise. The Hamiltonian can therefore be written as:

\[
H(x) = \sum_i -x_i + A \cdot \sum_{(i,j) \in \overline{E}} x_i x_j 
\]

The second summand of $H$ enforces the solution to be a clique while the first summand rewards larger cliques \cite{lucas2014ising}.

\subsection{Hamilton Cycles}
Let $G(V,E)$ be a graph. The Hamilton Cycles problem asks if there is a path that starts from vertex $v_0$, visits every other vertice exactly once, and ends in vertex $v_0$ \cite{lucas2014ising}. This problem has practical applications in various fields, including logistics, transportation, and circuit design \cite{kawarabayashi2001survey, laporte2007locating}. To formulate this problem as a QUBO we introduce binary variables $x_{i,j}$ with $i \in [1..|V|]$ and $j \in [1..|V|]$. $x_{i,j} = 1$ iff vertex $i$ is at position $j$ of the cycle. The Hamiltonian can now be written as:

\begin{equation*}
\begin{aligned}
    H(x) &= \sum_i -x_i + A \cdot \sum_{i,j} \sum_{k,l} x_{i,j} x_{k,l} \cdot I[i=k \lor j=l \\ &\lor (l = j + 1 \land (i,k) \notin E) \lor (l = |V| - 1 \land j = 0 \\ &\land (i,k) \notin E)]
\end{aligned}
\end{equation*}

$H$ consists of three constraints: (1) each vertex must be visited (2) two vertices can't be at the same position in the cycle (3) two vertices can not be in neighboring positions of the cycle if there is no edge in the graph connecting them.

\subsection{Graph Coloring}
The Graph Coloring problem encompasses a wide range of applications from scheduling to register allocation in compilers, and even to radio frequency assignment in wireless communication networks \cite{ahmed2012applications}. At its core, the problem revolves around assigning colors to the vertices of a graph in such a way that no two adjacent vertices share the same color. Let $G=(V,E)$ be a graph, and $K$ be the number of available colors. To formulate this problem as a QUBO we introduce binary variables $x_{i,k}$ representing the assignment of color $k$ to vertex $i$ \cite{lucas2014ising}.

\begin{equation*}
\begin{aligned}
    H(x) &= \sum_{i,k} -x_{i,k} + A \cdot \sum_{i,k_1} \sum_{j,k_2} x_{i,k_1} x_{j,k_2} \cdot I[i=j \\ &\lor (k_1 = k_2 \land (i,j) \in E)]
\end{aligned}
\end{equation*}

$H$ encodes the two constraints that each vertex can only have one color and two adjacent vertices can't have the same color.

\subsection{Vertex Cover}

Given an undirected graph $G=(V,E)$, the task is to determine the minimum number of vertices that need to be selected such that each edge in the graph is incident to at least one selected vertex. This problem is known to be NP-hard, and its decision form is classified as NP-complete \cite{karp1975computational}. To formulate this problem as a QUBO problem, we introduce binary variables $x_v$ for each vertex $v$, where $x_v = 1$ iff the vertex is selected and $x_v = 0$ iff it is not selected \cite{lucas2014ising}. The Hamiltonian $H(x) = H_A(x) + H_B(x)$ for this problem is an addition of $H_A(x)$ and $H_B(x)$, where $H_A(x)$ represents the constraint ensuring that every edge has at least one selected vertex. We encode the constraint using $H_A(x)$ as follows:
\[
H_A(x) = A \cdot \sum_{(u,v) \in E} (1 - x_u)(1 - x_v)
\]
$H_B(x)$ is used to minimize the number of selected vertices:
\[
H_B(x) = \sum_v x_v
\]

\subsection{Graph Isomorphism}
Graph Isomorphism (GI) is an important problem in graph theory that asks whether two graphs are structurally equivalent, albeit possibly differing in their vertex and edge labels. Formally, two graphs $G_1=(V_1, E_1)$ and $G_2=(V_2, E_2)$ are considered isomorphic if there exists a bijective mapping between their vertices such that their edge structures remain unchanged. In contrast to Maximum Clique, Hamilton Cycles, Graph Coloring and Vertex Cover, the complexity class for GI is still unknown (although it is expected to be in NP-intermediate) \cite{npintermediate}.

To formulate GI as a QUBO problem, we introduce binary variables $x_{i,j}$ representing the mapping of vertex $i$ of $G_1$ to vertex $j$ of $G_2$. The Hamiltonian can now be formulated as \cite{lucas2014ising}:

\begin{equation*}
\begin{aligned}
    H(x) &= \sum_i -x_i + A \cdot \sum_{i_1, j_1} \sum_{i_2, j_2} x_{i_1, j_1} x_{i_2, j_2} \cdot \\ &I[i_1 = j_2 \lor j_1 = j_2 \lor ((i_1, i_2) \in E_1 \land (j_1, j_2) \notin E_2) \\ &\lor ((i_1, i_2) \notin E_1 \land (j_1, j_2) \in E_2)]
\end{aligned}
\end{equation*}
\ \\

\section{Related Work}
In this paper, we propose the concept of \textit{Semi-Symmetries} in QUBO matrices $Q$ and an algorithm for factoring them out into ancilla qubits to reduce the number of couplings and therefore the number of CNOT gates and the circuit depth of QAOA. There are already two well-known types of symmetries in QUBO matrices: \textit{bit-flip-symmetry} and \textit{qubit-permutation-symmetry} \cite{shaydulin2021error, shaydulin2021exploiting, shaydulin2020classical}. Symmetry is defined here regarding the solution vectors $\{x\}$ and their associated energies $\{x^TQx\}$.

\subsection{Bit-flip-symmetry}

\textit{Bit-flip-symmetry} denotes the property of QUBOs that the inverse bit vector $x_I = 1 - x$ to a bit vector $x$ both have the same energy: ${(x_I)}^TQx_I = x^TQx$. \textit{Bit-flip-symmetries} occur, for example, in the Max-Cut problem:
\[
H(x) = \sum_{(i,j) \in E} - x_i - x_j + 2 x_i x_j
\]
\ \\
\textit{Bit-flip-symmetry} can be identified in a QUBO matrix $Q$ by substituting $x_i \gets (1 - x_i)$ and $x_j \gets (1 - x_j)$:
\begin{equation*}
\begin{aligned}
    H(x) &= \sum_{(i,j) \in E} - (1 - x_i) - (1 - x_j) + 2 (1 - x_i) (1 - x_j) = \\ &= \sum_{(i,j) \in E} -2 + x_i + x_j + 2 (1 - x_j - x_i + x_i x_j) = \\ &= \sum_{(i,j) \in E} - x_i - x_j + 2 x_i x_j
\end{aligned}
\end{equation*}
\ \\
Since the energy stays the same the QUBO contains a bit-flip-symmetry. Eliminating \textit{bit-flip-symmetry} can be done by removing the last qubit and assigning it the value $0$. Then, the remaining $(n-1) \times (n-1)$ QUBO matrix still encodes the original Hamiltonian.

\subsection{Qubit-permutation-symmetry}
Qubits $i$ and $j$ are \textit{qubit-permutation-symmetrical} if they have the same coupling values to all other qubits, i.e.:
\[
\forall \: k \in [1..n] : Q_{i,k} = Q_{j,k}
\]
This implies that for all $x^{(i=1, j=0)}$ it holds:
\[ H(x^{(i=1, j=0)}) = H(x^{(i=0, j=1)}) \] We use the notation $x^{(i=1, j=0)}$ for an arbitrary solution vector $x$ with qubit $i$ having value $1$ and qubit $j$ having value $0$. However, a trivial reduction of such a QUBO is not possible, since there are $3$ cases that have different energies: $x^{(i=0, j=0)}$, $x^{(i=1, j=0)}$ and $x^{(i=1, j=1)}$.

\subsection{Choosing a value for $p$}
Several works \cite{niu2019optimizing, pan2022automatic, ni2023more} have analyzed the influence of circuit depth on the success probability for QAOA. Note that the term \textit{depth} is sometimes used synonymously with the number of layers, which we refer to as $p$. In this paper, we exclusively refer to \textit{depth} as the depth of the transpiled quantum circuit. To select the optimal number of repetitions \textit{p}, several approaches have been proposed for automatically setting this hyperparameter \cite{pan2022automatic, ni2023more, pan2022efficient, lee2021parameters}. In our experiments, we always tested $p \in \{1,2,3\}$.

\subsection{Other approaches for eliminating couplings in $Q$}

In \textit{Algorithm 1}, the original $Q$ is modified by factoring out \textit{semi-symmetries} into additional ancilla qubits. However, we show that in doing so, the energy landscape for valid solutions is not altered. In contrast, there are heuristic approaches that alter the energy landscape to simplify $Q$. For example, in the paper \cite{sax2020approximate}, an approach was introduced to reduce the number of couplings in a QUBO by simply setting the smallest couplings to $0$ since they have the smallest influence on the energy landscape. By altering the energy landscape in this manner, it can no longer be guaranteed that the optimal solution $x_{mod}^*$ of the modified QUBO $Q_{mod}$ corresponds to the optimal solution $x^*$ of the original QUBO $Q$. 

Ising graphs associated to real-world problems, such as Airport Traffic Graphs, often exhibit a power-law structure \cite{ayanzadeh2023frozenqubits}, where some nodes have many more connections than others. In the paper \cite{ayanzadeh2023frozenqubits}, an approach is presented on how to partition the graphs with respect to these 'hubs'. This eliminates many couplings of the Hamiltonian, and the individual subgraphs can then be solved individually using a divide-and-conquer approach. A detailed analysis of the performance of QAOA depending on the graph structure is provided in \cite{herrman2021impact}. In \cite{ponce2023graph}, an approach is proposed on how large Max-Cut QUBOs can be solved by decomposing them into many smaller QUBOs. A similar approach is pursued in \cite{majumdar2021depth}.

There are already several papers \cite{shaydulin2021error, shaydulin2021exploiting, shaydulin2020classical} that exploit symmetries in QUBOs to generate more efficient and shorter QAOA circuits. In \cite{shaydulin2021error}, a method is proposed for leveraging \textit{bit-flip-symmetry} and \textit{qubit-permutation-symmetry} on Max-Cut graphs. In \cite{shaydulin2020classical} various types of symmetries that are relevant to QAOA and classical optimization problems are discussed. One prominent type is variable (qubit) permutation symmetries, which are transformations that rearrange the qubits of the quantum state without changing the problem's objective function. Such a symmetry can be caused when a graph contains automorphisms (a mapping of the graph to itself). The authors show that if a group of variable permutations leaves the objective function invariant, then the output probabilities of QAOA will be the same across all bit strings connected by such permutations, regardless of the chosen QAOA parameters and depth which can be used to reduce the dimension of the effective Hilbert space.

\section{Algorithm}

We start this section by providing a formal definition of \textit{conflicting qubits} and \textit{semi-symmetries}.

\begin{definition}[Conflicting qubits]
Let $H(x) = x^T Q x$ be the energy of a solution $x$. Qubits $i$ and $j$ are called \textit{conflicting}, iff for every solution $x^{(i=1, j=1)}$ it holds that: \[ H(x^{(i=1, j=1)}) > \{ H(x^{(i=1, j=0)}), H(x^{(i=0, j=1)}), H(x^{(i=0, j=0)}) \} \].
\end{definition}

\begin{algorithm}[H]
   \caption{Factoring our Semi-Symmetries}
   \label{alg:example}
\begin{algorithmic}
   \State {\bfseries Input:} QUBO matrix $Q$ of size $n \times n$
   \State \:\:\:\:\:\:\:\:\:\:\:\:\: $numAncillas \in \mathbb{N}$
   \State \:\:\:\:\:\:\:\:\:\:\:\:\: $z \in \mathbb{R^+}$
   \State $n_{new} = n$ 
   \State $cL \gets$ \textit{$getConflictList(Q,n_{new})$}
   \While{$len(cL) > 0$}
   \State $syms, (i,j) = getMostSymQubits(Q, n_{new}, cL)$
   \If{$len(syms) < 3 $ \textbf{or} $n_{new} = n + numAncillas$}
   \State \textbf{break}
   \EndIf
   \State $n_{new} = n_{new} + 1$
   \State $Q \gets enhance(Q, n_{new}, (i,j), syms)$
   \State $cL \gets$ \textit{$getConflictList(Q,n_{new})$}
   \EndWhile \\
   \Return $Q$
   
   \State 
   \Function{getConflictList}{$Q,n$}
   \State $cL \gets$ []
   \State $Z = [sum([Q_{i,j} : j \in [1..n] \land Q_{i,j} < 0]) : i \in [1..n]]$
   \For{$i \in [1..n], j \in [1..n]$}
   \If{$i < j \land Q_{i,j} > -Z[i]-Z[j]$}
   \State $cL$.append($(i,j)$)
   \EndIf
   \EndFor
   \State \textbf{return} $cL$
   \EndFunction
   
   \State 
   \Function{getMostSymQubits}{$Q, n, cL$}
   \State $best \gets (0,1)$
   \State $bestSyms \gets$ []
   \For{$(i,j) \in cL$}
   \State $syms = [k : k \in [1..n] : Q_{i,k} = Q_{j,k} \neq 0]$
   \If{len($syms$) $\geq$ len($bestSyms$)}
   \State $best \gets (i,j)$
   \State $bestSyms \gets syms$
   \EndIf
   \EndFor
   \State \textbf{return} $bestSyms, best$
   \EndFunction
   
   \State 
   \Function{enhance}{$Q, n, (i,j), syms$}
   \State $Q_{i,i} = Q_{i,i} + z$
   \State $Q_{j,j} = Q_{j,j} + z$
   \State $Q_{n,n} = z$
   \State $Q_{i,n} = -2 \cdot z$
   \State $Q_{j,n} = -2 \cdot z$
   \State $Q_{i,j} = 2 \cdot z$
   \For{$k \in syms$}
   \State $Q_{k,n} = Q_{i,k}$
   \State $Q_{i,k} = 0$
   \State $Q_{j,k} = 0$
   \EndFor
   \State \textbf{return} $Q$
   \EndFunction
   
\end{algorithmic}
\end{algorithm}

\begin{definition}[Semi-symmetry]
Conflicting qubits $(i,j)$ are semi-symmetric if and only if:
\[
\exists \: U \subseteq \{1..n\} \backslash \{i,j\} \land |U| \geq 3 : \forall \: k \in U : Q_{i,k} = Q_{j,k} \neq 0
\]
In other words, two \textit{conflicting qubits} $(i,j)$ are \textit{semi-symmetric} iff there are at least $3$ other qubits to which $i$ and $j$ have the same non-zero couplings. This is a weakened definition of symmetry compared to \textit{qubit-permutation-symmetry} where qubits $i$ and $j$ needed the same couplings to \textit{all} other qubits.
\end{definition}

Having defined \textit{semi-symmetries}, we now formulate our algorithm to factor them out into additional ancilla qubits. The complete procedure is summarized in \textbf{Algorithm 1} which takes as input the QUBO matrix $Q$, an integer \textit{numAncillas} which represents the maximal number of ancilla qubits we want to use and a positive penalty value $z$.

The algorithm starts with identifying all \textit{conflicting qubits}, calculating the \textit{semi-symmetries} for all of them, and selecting the qubits $(i,j)$ with the largest \textit{semi-symmetry}. Then it enhances the QUBO matrix by appending an ancilla qubit $a$ and factoring out the semi-symmetries. To do so, the algorithm copies all coupling values between qubit $i$ and the list of symmetrical qubits $syms$ to the coupling values between $a$ and $syms$, removes all coupling values between $i$ and $syms$ as well as between $j$ and $syms$. Further, it adds a constraint that enforces the ancilla qubit $a$ to be $1$ if and only if either qubit $i$ or $j$ is $1$.

The algorithm has runtime complexity $O(n^3)$ since the while-loop runs for at most $n$ iterations and the functions \textit{getMostSymQubits()} and \textit{getConflictList()} both have runtime $O(n^2)$ and the function \textit{enhance()} has constant runtime.

\subsection{Proof-of-Concept Example}
In the following section, we demonstrate our algorithm for a simple proof-of-concept example. To do this, we consider the following graph:

\begin{figure}[H]
 \centering
  \includegraphics[scale=0.3]{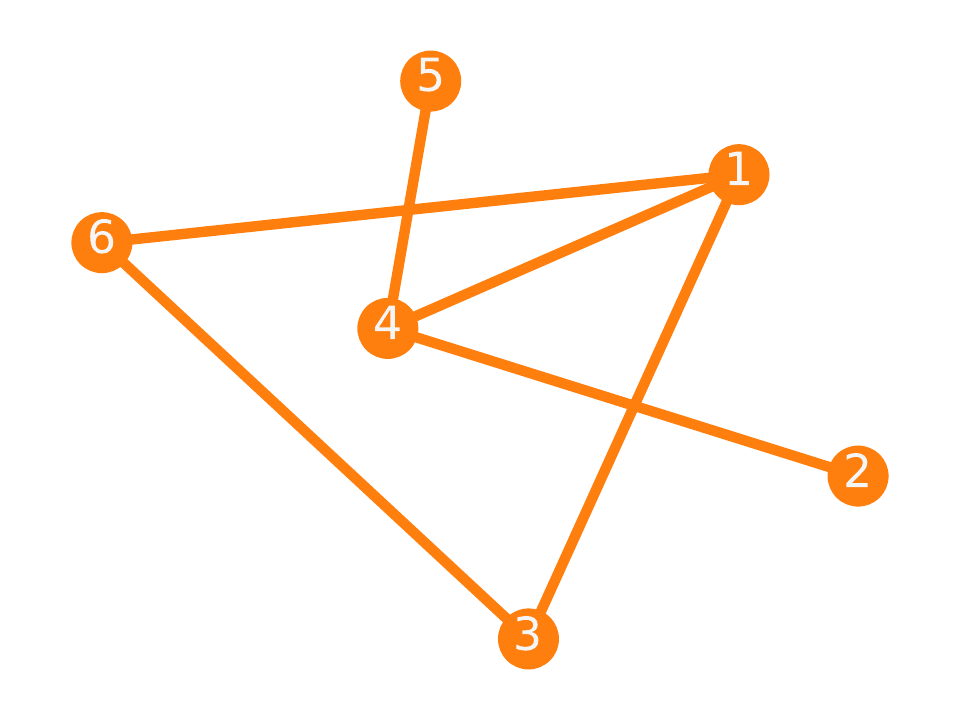}
  \caption{A simple graph for demonstrating proof-of-concept.}\label{fig:main_charts}
\end{figure}

We now want to find the largest clique (Maximum Clique) for the graph $G=(V,E)$ in Figure 1, i.e. the largest set of nodes for which each pair of nodes is connected by an edge. The Hamiltonian that encodes this problem is given by :

\[
H = \sum_i -x_i + \sum_{(i,j) \in \overline{E}} 3 x_i x_j
\]

The QUBO matrix $Q$ for Maximum Clique and the graph from \textbf{Figure 1} is listed in \textbf{Table I (left)}. It requires $6$ qubits and $9$ couplings. $Q$ contains a \textit{semi-symmetry} between qubits $2$ and $5$ which can be factored out into an additional ancilla qubit $7$ (see \textbf{Table I (right)}). The modified QUBO matrix $Q_{mod}$ requires $7$ qubits but only $8$ couplings.

\begin{figure*}[t!]
\centering
\minipage{0.88\textwidth}
  \centering
  \includegraphics[width=\linewidth]{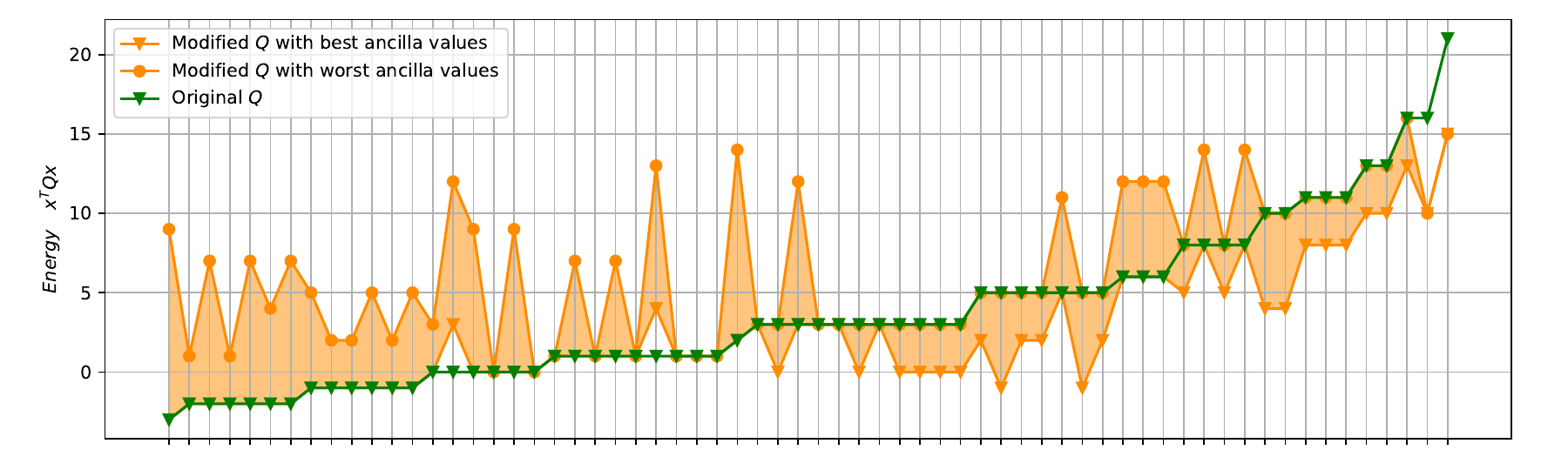}
  \includegraphics[width=\linewidth]{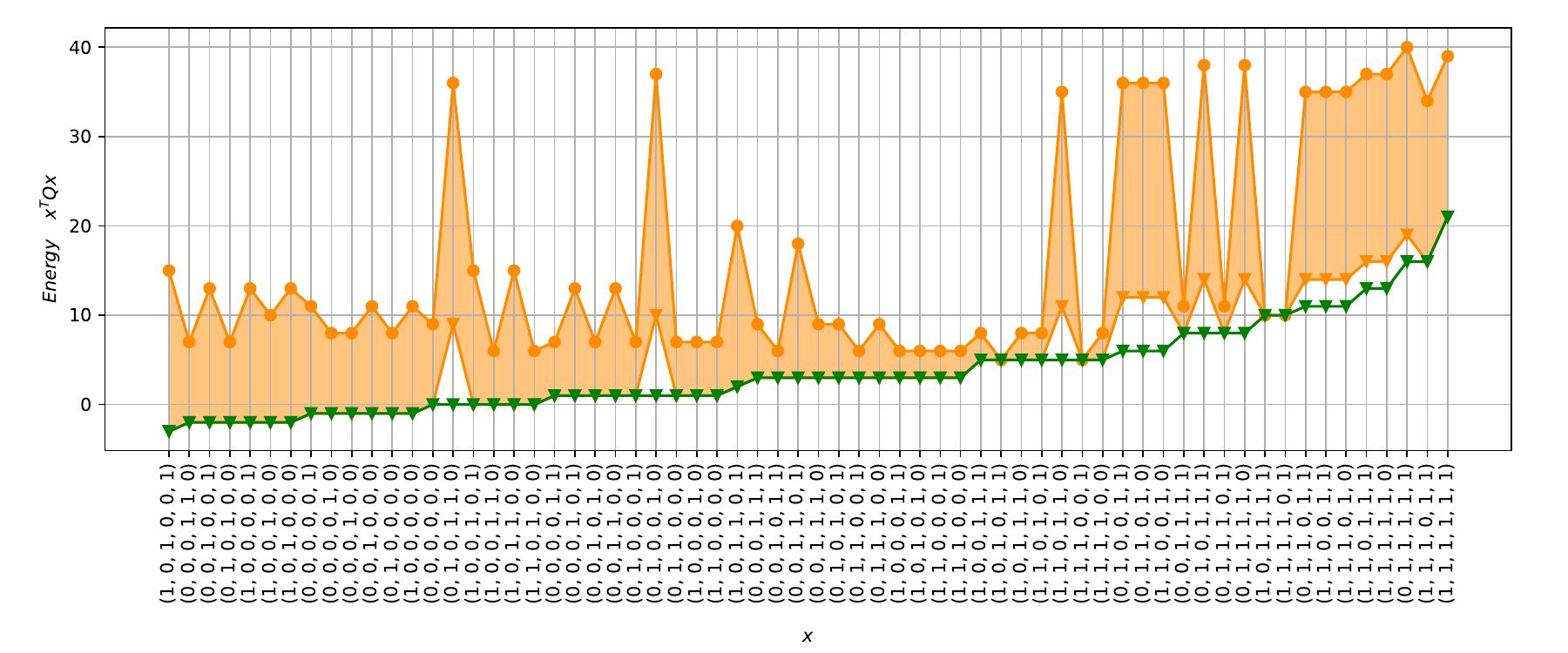}
  \caption{The green line represents the sorted energy spectrum of the left QUBO in Table I. The orange lines represent the energy spectrum of the right QUBO in Table I with the lower graph in both plots being the energetically more favorable choice of the ancilla value while the upper graph in both plots representing the energetically less favorable choice. The upper plot represents the energy spectrum of $Q_{mod}$ with $z=3$ and the lower plot with $z=9$. For $z=3$ we can see that invalid solutions can have a lower energy in $Q_{mod}$ than in $Q$ but if we increase $z$ all invalid solutions have an energy equal or higher than in $Q$. But even $z=3$ is already sufficient for the global optimum $x^*$ in $Q$ to also be the global optimum in $Q_{mod}$.}\label{fig:main_charts}
\endminipage
\end{figure*}

\begin{table}[H]
  \centering
    \begin{tabular}{|l|l|l|l|l|l|} \hline 
         -1& 3& & & 3& \\\hline
         & -1& 3& & 3& 3\\\hline
         & & -1& 3& 3& \\\hline
         & & & -1& & 3\\\hline
         & & & & -1& 3\\\hline
         & & & & & -1\\\hline
    \end{tabular}
    \vspace{1em}
    \vspace{1em}
    \begin{tabular}{|l|l|l|l|l|l|l|} \hline 
         -1& & & & & & 3\\\hline
         & 2& & & 9& & -6\\\hline
         & & -1& 3& & & 3\\\hline
         & & & -1& & 3& \\\hline
         & & & & 2& & -6\\\hline
         & & & & & -1& 3\\\hline
         & & & & & & 3\\\hline
    \end{tabular}
    \caption{(Left) QUBO matrix $Q$ for Maximum Clique and the graph in Figure 1. (Right) Modified QUBO matrix $Q_{mod}$ of $Q$ using \textbf{Algorithm 1}. The \textit{semi-symmetry} between qubits $2$ and $5$ was factored out into an additional ancilla qubit.}
\end{table}

In the following section, we theoretically show that our algorithm doesn't change the energy landscape for valid solutions and in section \textit{E.} we analyze the energy spectra for both QUBO matrices in \textit{Table I}.

\subsection{Theoretical Analysis for Correctness}
We can prove that our modified QUBO $Q_{mod}$ has the same optimal solutions as $Q$ with the best choice of ancilla values, i.e. \textbf{Algorithm 1} doesn't change the energies of valid solutions and doesn't decrease the energy of invalid solutions. Valid solutions $x$ are bit-vectors that don't violate \textit{conflicting qubit} constraints, i.e. if $(i,j)$ are conflicting then $x_i = 0$ or $x_j = 0$. Invalid solutions are bit-vectors with $x_i = 1$ and $x_j = 1$. 

\begin{proposition}
If we choose $z = \sum_{(i,j)} |Q_{i,j}|$, valid solutions $x$ have the same energy regarding $Q$ as to $Q_{mod}$ with the best values for the ancilla qubits $x_{mod} = x + [x_a]$. The energy of invalid solution doesn't decrease with respect to $Q_{mod}$ even with the best ancilla values.
\end{proposition}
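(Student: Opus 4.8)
The plan is to reduce the statement to the analysis of a single invocation of \textsc{enhance}, since \textbf{Algorithm 1} produces $Q_{mod}$ by applying this routine repeatedly, each time introducing a fresh ancilla. I would fix one step that factors a conflicting, semi-symmetric pair $(i,j)$ over the common-neighbour set $U$, and write $w_k := Q_{i,k} = Q_{j,k}$ for $k \in U$ (nonzero and equal by Definition IV.2). Every term of $H$ not touching $i$, $j$, some $k\in U$, or the new ancilla $a$ is identical in $Q$ and $Q_{mod}$, so it suffices to track
\[
\Delta(x,x_a) := H_{mod}(x,x_a) - H(x),
\]
which after cancellation collects only the terms created by \textsc{enhance}. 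Using the semi-symmetry to rewrite $\sum_{k\in U} Q_{i,k}x_i x_k + \sum_{k\in U} Q_{j,k} x_j x_k = (x_i+x_j)\sum_{k\in U} w_k x_k$ on the $Q$ side against the moved couplings $\sum_{k\in U} w_k x_a x_k$ on the $Q_{mod}$ side, one obtains a compact expression in $x_i,x_j,x_a$ and the single scalar $S := \sum_{k\in U} w_k x_k$.

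First I would substitute each of the four assignments of $(x_i,x_j)$ and minimise over $x_a\in\{0,1\}$. For the three valid assignments the computation should collapse to $\Delta = (\,\text{indicator}\,)\cdot(z\pm S)$, so the energetically best ancilla is exactly $x_a = x_i \lor x_j$ and yields $\min_{x_a}\Delta = 0$; this proves the first claim, that valid $x$ keep their energy under $x_{mod}=x+[x_a]$. For the invalid assignment $(1,1)$ the two ancilla choices give $4z-2S$ and $z-S$, so $x_a=1$ is optimal with $\min_{x_a}\Delta = z-S$, which I would then argue is $\ge 0$, establishing that invalid solutions never drop below their original energy.

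The key quantitative input, used in every case, is that $z$ dominates $|S|$. Here I would invoke the choice $z=\sum_{(i,j)}|Q_{i,j}|$: since $\{w_k : k\in U\}$ is a subset of the couplings, $|S| \le \sum_{k\in U}|w_k| \le \sum_{(i,j)}|Q_{i,j}| = z$, hence $z\pm S\ge 0$ throughout, closing all four cases.

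Finally I would lift the single-step result to the whole algorithm by induction on the number of ancillas: treat the output of one \textsc{enhance} call as the input of the next, and minimise over the ancillas given a fixed assignment of the remaining qubits. The main obstacle I anticipate is this composition step rather than the local algebra. After the first call the matrix already contains entries of size $z$ and $2z$, and since \textsc{getMostSymQubits} and \textsc{getConflictList} range over all indices up to $n_{new}$, a later iteration may select a pair whose current couplings include these large entries; I must therefore verify as an invariant that the domination bound $z\ge\sum_{k}|w_k|$ still holds for the pairs chosen in subsequent steps, and that distinct ancilla blocks do not interact (each ancilla couples only to its own $i,j$ and $U$). I would isolate and confirm this invariant before assembling the per-step equalities and inequalities into the final statement.
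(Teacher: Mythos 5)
Your single-step analysis is, in substance, the paper's own proof: your four assignments of $(x_i,x_j)$ minimized over $x_a$ are exactly the paper's Cases 1--8, closed by the same domination bound derived from $z=\sum_{(i,j)}|Q_{i,j}|$ (your values $4z-2S$ and $z-S$ for the invalid pair are precisely the paper's Cases 7 and 8). You are in fact more careful on one point: you keep the factors $x_k$ inside $S=\sum_{k\in U}w_k x_k$, whereas the paper writes $\sum_{k\in syms}Q_{i,k}$, implicitly setting all $x_k=1$; since the inequality $|S|\le\sum_{k\in U}|w_k|\le z$ covers every assignment of the $x_k$, this is a repair of the paper's sloppiness rather than a change of route.

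Where you genuinely depart from the paper is the composition step, and you are right that it is the real obstacle: the paper proves only the one-ancilla statement and silently identifies it with the claim about $Q_{mod}$, which \textbf{Algorithm 1} produces by up to $numAncillas$ successive calls of \textit{enhance} on an already-modified matrix. Your flagged invariant does hold, but it needs an argument that neither you nor the paper supplies. After \textit{enhance} touches $(i,j,a)$, rows $i$, $j$ and $a$ each contain an entry $-2z$, while every off-diagonal entry of any intermediate matrix has absolute value at most $2z$ (original entries are bounded by $z$, relocated entries are original values, created entries are $\pm 2z$); hence no pair involving $i$, $j$ or $a$ can ever again pass the test $Q_{u,v}>-Z[u]-Z[v]$ in \textit{getConflictList}, so pairs selected in later iterations consist of previously untouched qubits, and the couplings collected into their sym-sets are always (possibly relocated) original entries of $Q$, giving $\sum_k|w_k'|\le z$ at every step. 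Note also that your parenthetical claim that distinct ancilla blocks do not interact is not literally true --- a later call with $a\in syms'$ relocates a coupling onto the position $(a,a')$ --- but this does no harm: each step's cancellation holds for arbitrary fixed values of all other qubits, including earlier ancillas, so the per-ancilla minimizations compose in sequence. With that lemma inserted, your induction closes the proposition at a level of rigor the paper's own proof does not reach.
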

\begin{proof}
Let $Q$ be any QUBO matrix and $x \in \mathbb{B}^n$ be any solution vector. The energy $E$ for $x$ corresponds to $E = x^T Q x$. Let $(x_i,x_j)$ be a pair of conflicting qubits, i.e. no valid solution $x$ contains assignments $i=1$ and $j=1$ at the same time. Further assume that $(x_i,x_j)$ are semi-symmetrical and \textbf{Algorithm 1} factored out the semi-symmetries into an ancilla qubit $x_a$.

\ \\
\textbf{Case 1: $x_i = 0, x_j = 0, x_a = 0$}: in this case, we can easily see that the energy of $x_{mod} = x + [0]$ regarding $Q_{mod}$ is identical to the original energy: $E_{mod} = {(x + [0])}^T \cdot Q_{mod} \cdot (x + [0]) = E$.

\ \\
\textbf{Case 2: $x_i = 0, x_j = 0, x_a = 1$}: in this case the modified energy corresponds to: $E_{mod} = E + z + \sum_{k \in syms} Q_{i,k}$. Since we can choose $z = \sum_{(i,j)} |Q_{i,j}|$, it holds that $z + \sum_{k \in syms} Q_{i,k} \geq 0$. Therefore: $E_{mod} \geq E$.

\ \\
\textbf{Case 3: $x_i = 1, x_j = 0, x_a = 0$}: $E_{mod} = E + z - \sum_{k \in syms} Q_{i,k}$. Again, since $z = \sum_{(i,j)} |Q_{i,j}|$, it holds that $z - \sum_{k \in syms} Q_{i,k} \geq 0$. Therefore: $E_{mod} \geq E$.

\ \\
\textbf{Case 4: $x_i = 1, x_j = 0, x_a = 1$}: $E_{mod} = E + z + z - 2z + \sum_{k \in syms} Q_{i,k} - \sum_{k \in syms} Q_{i,k} = E$.

\ \\
\textbf{Case 5: $x_i = 0, x_j = 1, x_a = 0$}: analogous to case 3.

\ \\
\textbf{Case 6: $x_i = 0, x_j = 1, x_a = 0$}: analogous to case 4.

\ \\
\textbf{Case 7: $x_i = 1, x_j = 1, x_a = 0$}: $E_{mod} = E + z + z + 2z - \sum_{k \in syms} Q_{i,k} - \sum_{k \in syms} Q_{i,k} > E$.

\ \\
\textbf{Case 8: $x_i = 1, x_j = 1, x_a = 1$}: $E_{mod} = E + z + z + z - 2z - 2z + 2z - \sum_{k \in syms} Q_{i,k} - \sum_{k \in syms} Q_{i,k} + \sum_{k \in syms} Q_{i,k} = E + z - \sum_{k \in syms} Q_{i,k} \geq E$.

\ \\
The best choices for the ancilla qubit for valid solutions are \textit{case 1}, \textit{case 4} and \textit{case 6} which all have energy $E$. Therefore, the energy did not change for valid solutions. For invalid solutions (\textit{cases 7} and \textit{8}) the energy does not decrease.

\end{proof}

\subsection{Empirical Evaluation of the Energy Landscape for the PoC}
We now empirically investigate this theoretical finding in our proof-of-concept example. Since there are $6$ qubits, there are $2^6=64$ possible solutions $x$. For each $x$ we calculated the energy regarding $Q$ (Table I, left), see green lines in Figure 2. Further, we have calculated the energy in the modified QUBO (Table I, right) with both possible values (0 and 1) for the ancilla qubit 7. Then we have plotted for each $x$ the original energy, the energy in the modified $Q_{mod}$ with the worse choice for the ancilla qubit and the better choice for the ancilla qubit.

The upper plot in Figure 2 shows the result with $z=3$, and the lower plot shows the result with $z=9$. We can verify the proposition if we choose $z$ big enough, but often a lower value for $z$ is already enough for the original optimal solution $x$ to also be the optimal solution in $Q_{mod}$.

\section{Experiments}

To evaluate our approach, we selected the following five optimization problems: Maximum Clique, Hamilton Cycles, Graph Coloring, Vertex Cover and Graph Isomorphism.
\begin{figure*}
\centering
\minipage{0.97\textwidth}
  \centering
  \includegraphics[width=\linewidth]{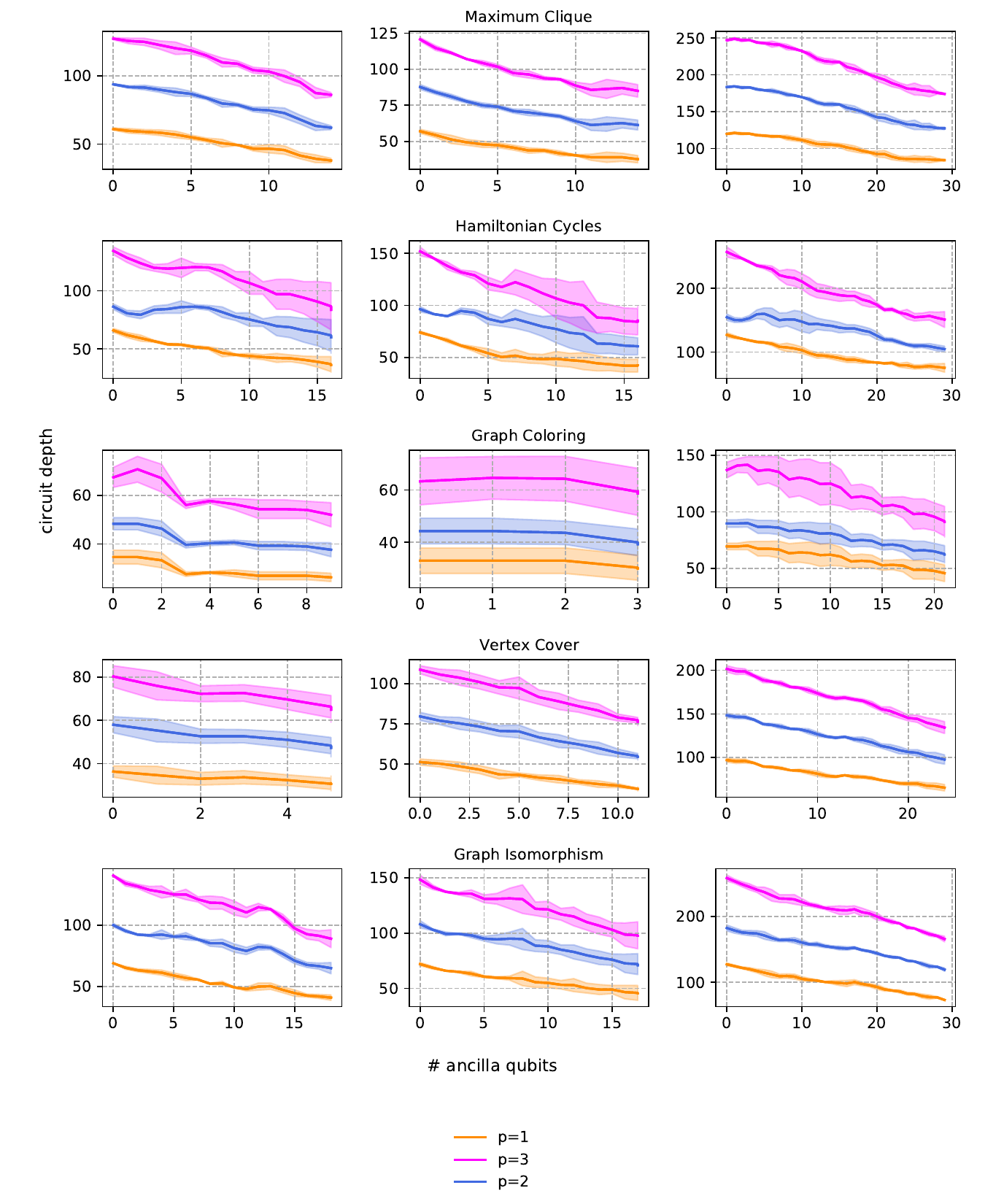}
  \caption{This plot shows the results of our experiments on reducing the depth of QAOA circuits by factoring out \textit{semi-symmetries} using the algorithm presented in section 3. We tested our approach for five optimization problems, including Maximum Clique, Hamilton Cycles, Graph Coloring, Vertex Cover, and Graph Isomorphism. The $x$-axis delineates the number of ancilla qubits $numAncillas$ employed, ranging from $0$ to $29$, representing the number of semi-symmetries the algorithm removed from the QUBO. Meanwhile, the $y$-axis captures the resultant circuit depth, offering insights into the efficiency gains achieved through our algorithm. Notably, the plots demonstrate a substantial reduction in circuit depth by up to $41\%$, depending on the specific problem and parameter settings}\label{fig:main_charts}
\endminipage
\end{figure*}
Each problem is parameterized by a set of problem parameters: for all of them this is the number of vertices $|V|$ and the number of edges $|E|$ and for Graph Coloring additionally the number of colors $K$. For each problem, we have examined $3$ different settings for these problem parameters (details about them are listed in appendix Table II). For each of the $5$ problems and each of the $3$ settings, we sampled $4$ random graph $G=(V,E)$. We then created the corresponding QUBO matrix $Q$ and used \textbf{Algorithm 1} to remove $numAncillas$ semi-symmetries which resulted in a modified QUBO matrix $Q_{mod}^{numAncillas}$. For each $numAncillas \in [1..29]$. We then counted how many couplings the QUBO $Q_{mod}^{numAncillas}$ contained. Figure 3 shows the results of this experiment. Here, the $x$-axis represents the number of ancilla qubits $numAncillas$ (the number of semi-symmetries our algorithm removed from the QUBO), and the $y$-axis represents the number of couplings of the QUBO $Q_{mod}^{numAncillas}$. The solid lines represent the mean over all 4 graphs and the shaded area represents the standard deviation. The algorithm was given a budget of $29$ ancilla qubits, however less qubits were used if the QUBOs contained less than $29$ semi-symmetries. The results show that our approach reduced the number of non-zero couplings in $Q_{mod}^{numAncillas}$ compared to $Q$ by up to $49\%$.

\begin{figure}[H]
 \centering
  \includegraphics[scale=0.39]{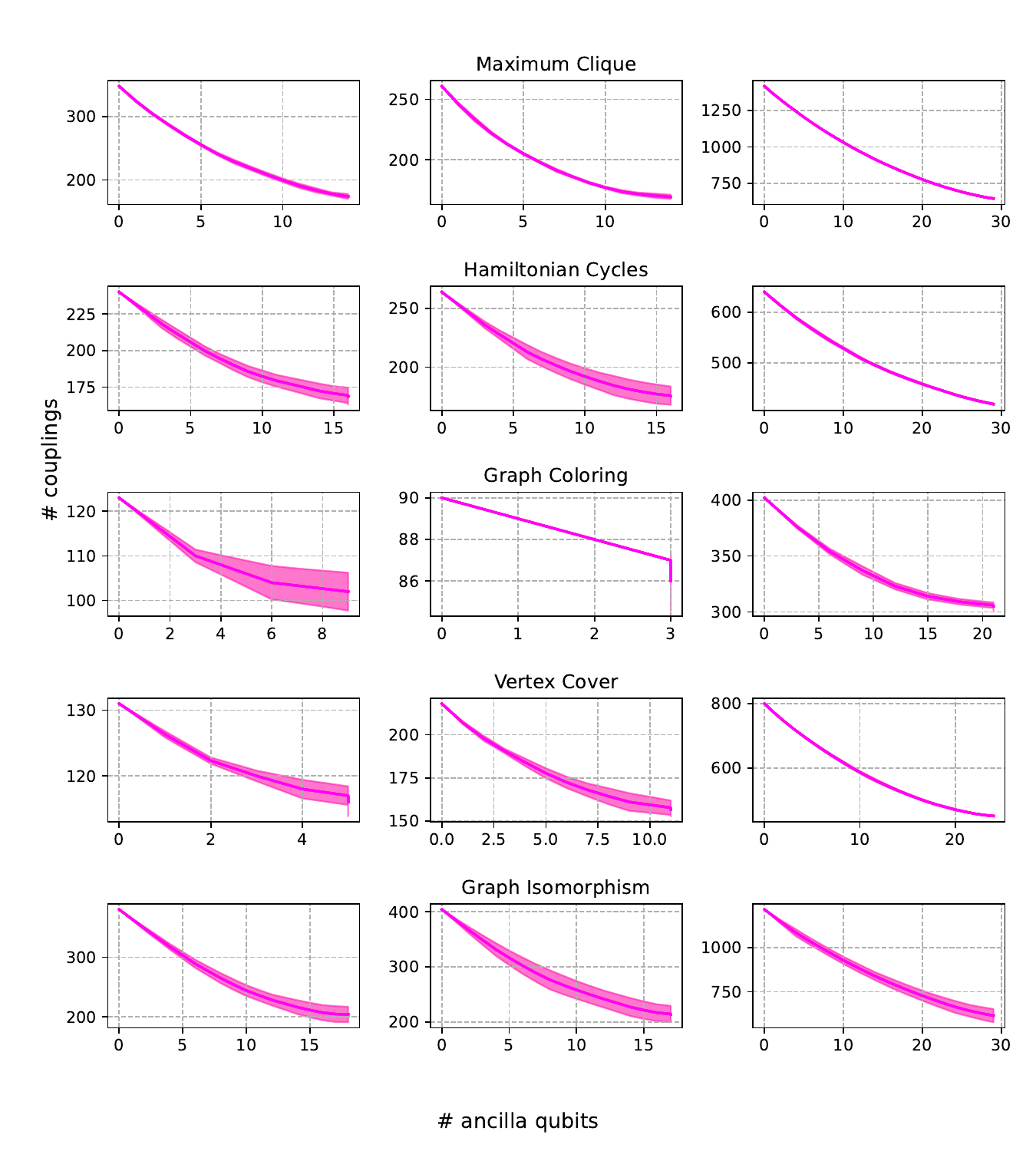}
  \caption{Pareto Front of \textit{$\#$ Ancilla Qubits} and \textit{$\#$ Couplings} for $5$ different optimization problems and $3$ problem parameter settings each.}\label{fig:main_charts}
\end{figure}

Based on the results of this first experiment, we transpiled the QAOA circuits for each QUBO $Q_{mod}^{numAncillas}$ using $p \in \{1,2,3\}$. The plots in Figure 4 visualize the relationship between the number of ancilla qubits ($x$-axis) and the circuit depth ($y$-axis). The results show that we were able to reduce the circuit depth, depending on the problem and the setting by up to $41\%$. For example in the \textit{Hamilton Cycles} problem in the last setting $(V=8, E=16)$ we reduced the circuit depth for $p=3$ from $257$ to $151$ $(-41\%)$ using $29$ ancilla qubits. In the first setting $(V=30, E=87)$ of the \textit{Maximum Clique} problem, our algorithm reduced the circuit depth from $127$ to $86$ $(-32\%)$ using just $16$ ancilla qubits (for $p=3$).

In our experiments, \textit{Graph Coloring} contained the least number of \textit{semi-symmetries}, here we could only reduce the circuit depth from $137$ to $91$ $(-33.5\%)$ for the largest problem setting and $p=3$. We appended the code as supplementary material for all of our experiments for easy reproduction.

\section{Conclusion}
This paper presents a novel approach to address the challenges associated with the practical implementation of the QAOA in tackling combinatorial optimization problems on quantum computers. By factoring out \textit{semi-symmetries} into additional ancilla qubits, we have successfully reduced the number of non-zero couplings and therefore also the number of CNOT gates and the depth of the QAOA circuits, thereby enhancing their scalability. We defined \textit{semi-symmetries} as a set of at least $3$ identical, non-zero couplings of two \textit{conflicting} qubits to other qubits.

We have shown theoretically that our approach does not change the energy landscape for valid solutions, i.e. solutions that don't violate \textit{conflicting qubits}.

Through experimental validation on five well-known optimization problems, including Maximum Clique, Hamilton Cycles, Graph Coloring, Vertex Cover, and Graph Isomorphism, we have shown that our approach reduced the circuit depth by up to $41\%$ and reduced the number of coupling in the QUBO by up to $49\%$.

Looking ahead, further research can investigate the robustness of our approach against noise and error rates inherent in near-term quantum devices.

\section*{Acknowledgment}
This publication was created as part of the Q-Grid project (13N16179) under the ``quantum technologies -- from basic research to market'' funding program, supported by the German Federal Ministry of Education and Research.

\bibliographystyle{plain}
\bibliography{bibliography}

\appendix

\begin{table}[H]
  \centering
    \begin{tabular}{|l|l|l|} \hline
         $V=30, E=87$ & $V=30, E=174$ & $V=60, E=354$ \\\hline
         $V=6, E=10$ & $V=6, E=8$ & $V=8, E=16$ \\\hline
         $V=10, E=31$ & $V=10, E=20$ & $V=20, E=114$ \\
         $K=3$ & $K=3$ & $K=3$ \\\hline
         $V=30, E=131$ & $V=30, E=218$ & $V=50, E=800$ \\\hline
         $V=6, E=10$ & $V=6, E=8$ & $V=8, E=16$ \\\hline
    \end{tabular}
    \vspace{1em}
    \caption{Problem setting values for the experiments. Using these values graphes were uniformly sampled.}
\end{table}

\end{document}